\documentclass[runningheads]{llncs}

\usepackage[utf8]{inputenc}
\usepackage{algorithmicx}
\usepackage{algorithm}
\usepackage{algpseudocode}
\usepackage{amsmath}
\usepackage{graphicx}
\usepackage{amssymb}
\usepackage{color}
\usepackage{hyperref}

\newtheorem{clm}{Claim}
\newtheorem{thm1}{Theorem}

\DeclareMathOperator{\depth}{depth}
\DeclareMathOperator{\cost}{cost}
\DeclareMathOperator{\leftn}{left()}
\DeclareMathOperator{\rightn}{right()}
\DeclareMathOperator{\rootn}{root()}
\DeclareMathOperator{\stat}{STAT}
\def\A{\mathcal{A}}
\def\On{\textsc{On}}

\title{Arithmetic Binary Search Trees: \\
Static Optimality in the Matching Model}
\author{Chen Avin}
\institute{School of Electrical and Computer Engineering \\
Ben Gurion University of the Negev, Israel\\
\email{avin@cse.bgu.ac.il}}


\begin{document}
\maketitle

\begin{abstract}
Motivated by recent developments in optical switching and reconfigurable network design, we study dynamic binary search trees (BSTs) in the matching model. In the classical dynamic BST model, the cost of both link traversal and basic reconfiguration (rotation) is $O(1)$. However, in the matching model, the BST is defined by two optical switches (that represent two matchings in an abstract way), and each switch (or matching) reconfiguration cost is $\alpha$ while a link traversal cost is still $O(1)$. 
In this work, we propose Arithmetic BST (A-BST), a simple dynamic BST algorithm that is based on dynamic Shannon-Fano-Elias coding, and show that A-BST is statically optimal for sequences of length $\Omega(n \alpha \log \alpha)$ where $n$ is the number of nodes (keys) in the tree.

\keywords{binary search trees \and
static optimality \and
matching model \and
arithmetic coding \and
Entropy.}
\end{abstract}

\section{Introduction}

In this paper, we study one of the most classical problems in computer science, the design of efficient binary search trees (BSTs) \cite{knuth1971optimum}.  More concretely, we are interested in \emph{Dynamic} \cite{mehlhorn1979dynamic} or \emph{Self-Adjusting} \cite{sleator1985self} binary search trees that need to serve a sequence of requests. 
While traditionally binary search trees are studied in the context of data-structures with a focus on memory and running time optimization, we are motivated in the physical world implementations of binary search trees and, in particular, \emph{self-adjusting} networks \cite{avin2019toward}.

\paragraph{\bf The Matching model and Self-adjusting networks.} Self-adjusting networks, or reconfigurable networks, are communication networks that enable dynamic physical layer topologies \cite{farrington2010helios,ghobadi2016projector,mellette2017rotornet,ballani2020sirius}, i.e.,  topologies that can change over time. Recently, reconfigurable optical switching technologies have introduced an intriguing alternative
to the traditional design of datacenter networks, allowing to dynamically establish direct shortcuts between communicating partners, depending on the demand \cite{ghobadi2016projector,hamedazimi2014firefly,mellette2017rotornet,alistarh2015high}. For example, such reconfigurable links could be established to support elephant flows or traffic between two racks with significant communication demands. The potential for such demand-aware optimizations is high: empirical studies show that communication traffic features both spatial and temporal \emph{locality}, i.e., 
traffic matrices are indeed sparse and a small number of elephant flows can constitute a significant fraction of the datacenter traffic \cite{roy2015inside,benson2009understanding,avin2020complexity}.
The main metric of interest in these networks is the flow completion time or the average packet delay, where the reconfiguration delay (aka as latency tax \cite{griner2020performance}) can be several orders of magnitude larger than the forwarding delay, i.e., milliseconds or microseconds vs. nanoseconds or less. This brings a striking contrast to the common data-structures approach where pointer forwarding and pointer changing (e.g., rotations) are of the \emph{same} time order and considered as a unit cost.

Previous work \cite{schmid2015splaynet,avin2019renets,avin2017demand} has shown that binary search trees (and other types of trees) can be used as an important building block in the design of self-adjusting networks since they carry nice properties like local routing and low degree . Moreover, to capture many recent designs of demand-aware networks like \cite{ghobadi2016projector,mellette2017rotornet,ballani2020sirius,farrington2010helios,xpander}, a simple leaf-spine datacenter network model called ToR-Matching-ToR (TMT) was recently proposed \cite{avin2020online}. In the TMT model, $n$ Top of the Rack (ToR) switches (leaves) are connected via optical switches (spines), and each spine switch can have a dynamically changing \emph{matching} between its $n$ input-output ports. See Figure \ref{fig:system} for an illustration. The exact networking and technical details of the model are out of the scope of the paper, but abstractly with only two spine switches (i.e., two matchings), the model supports a simple implementation of a dynamic binary search tree that enables a greedy routing from the root (a ToR) to any other node (ToR). 

\begin{figure}[t!]
\begin{centering}
\includegraphics[width=.6\columnwidth]{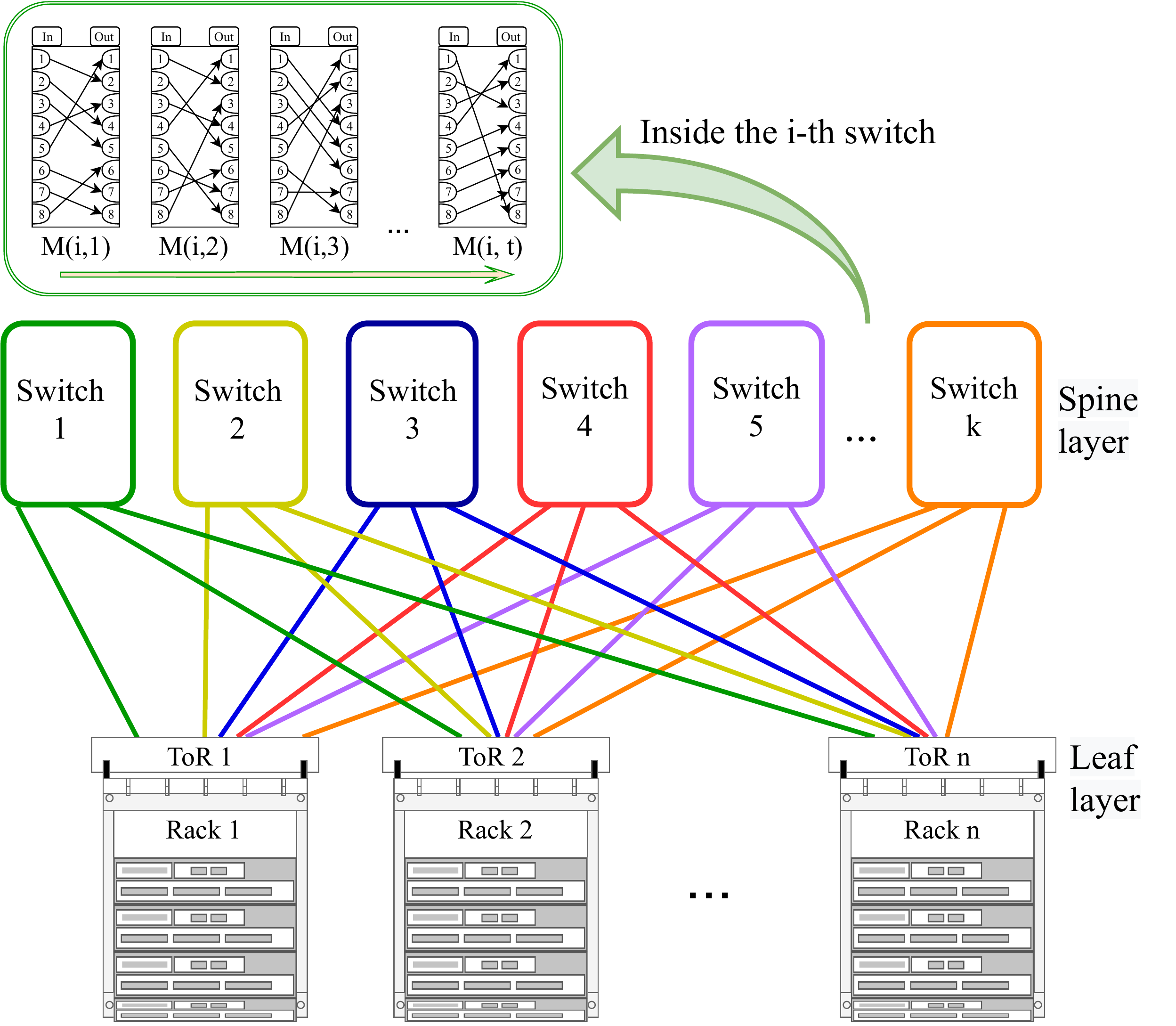}     
\caption{Overview of ToR-Matching-ToR (TMT) network model \cite{avin2020online}. The $n$ nodes of the BST are the leaf switches and they are interconnected via $n$ port spine switches, each consist of a (dynamic) matching. A BST can be implemented in this model using two spine switches. The cost (reconfiguration delay) of updating a matching is $\alpha$.}
\label{fig:system}
\end{centering}
\end{figure}

This leads to the \emph{matching model} for dynamic binary search trees that we study in this paper where the tree can change at any time to any other tree but at a reconfiguration cost which is a parameter denoted as $\alpha$. 

\subsection{Formal Model and Main Result}

We consider a dynamic binary search tree (BST) that needs to serve a sequence of requests from a set $V$ of $n$ unique keys, where the value of the $i$'th key is $v_i$ and the keys impose a complete order.
For simplicity and w.l.o.g  we assume that the keys are sorted by their value, i.e., for $i<j$, $v_i \le v_j$ (otherwise we can just rename the keys). 
Let $\sigma = \{\sigma_1, \sigma_2, \dots \sigma_m\}$ be a sequence of $m$ requests for keys where $\sigma_j \in V$.

A binary tree $T$ has a finite set of nodes with one node designated to be the \emph{root}. Each node has a \emph{left} and a \emph{right} child which can be either a different node or a null object. Every node in $T$, but for the root node, has a single parent node for which it is a child. 
We denote the root of a tree $T$ by $T.\rootn$ and for a node $v$, let $v.\leftn$ and $v.\rightn$ denote it left and right children. 
The depth of $v$ in $T$ is defined as the number of nodes on the path from the root to $v$, denoted as $\depth(T, v)$. The depth of the root is therefore one.

In a \emph{binary search tree} $B$, each node has a unique key. We assume the set of nodes to be $V$ and use nodes and keys interchangeably. The tree satisfies the \emph{symmetric order} condition: every node’s key is greater than the keys in its left child subtree and smaller than the keys in its right child subtree.

For a given set of keys $V$ and a sequence $\sigma$, the cost of the optimal \emph{static} BST is defined as the BST with minimal cost to serve $\sigma$, formally
\begin{align}
    \stat(\sigma) = \min_{B \in \mathcal{B}} \sum_{t=1}^m \depth(B,\sigma_t)
\end{align}
where $\mathcal{B}$ is the set of all binary search trees.

A \emph{dynamic} BST is a sequence of BSTs, $B_t$ with $V$ as the set of nodes. 
At each time step $t$, the cost of searching (or serving) a key $\sigma_t$ is 
$\depth(B_t, \sigma_t)$, but after the request is served the BST $B_t$ can adjust its structure (while maintaining the symmetric order property). 

The key point of this paper and in our model is that the cost assumptions differ.
In previous studies, classical  \cite{baer1975weight,allen1978self,mehlhorn1979dynamic,sleator1985self}, as well as more recent ones \cite{demaine2007dynamic,levy2019new,chalermsook2020new}, to name a few, the cost of both a single link traversal and a single pointer change (usually called rotation) is $O(1)$ units. 
In most of these algorithms the cost of reconfiguration of the tree is kept proportional to cost of accessing the recent key.
In contrast, in the \emph{matching model} the cost model assumes that \emph{any} change from $B_t$ to $B_{t+1}$ is possible (it could be a single rotation or a completely new tree) but at the cost of $\alpha$ units.

For a dynamic BST algorithm $\A$ the total cost of serving $\sigma$ starting from a BST $B=B_1$ is defined in the matching model as follows,
\begin{align}
    \cost(\A, \sigma, B) = \sum_{t=1}^m \depth(B_t,\sigma_t) + \alpha \mathbb{I}_{B_t \neq B_{t+1}}
\end{align}
Where $\mathbb{I}_e$ is the indicator function, i.e., $\mathbb{I}_e=1$ if the event $e$ is true and 0 otherwise.
From a networking perspective the cost of a dynamic BST algorithm can be seen as the total delay to serve (i.e., route) requests to nodes in $\sigma$ from a single source that is connected to the tree's root. 

In \emph{static optimality}, which we next define, we want the dynamic BST algorithm to (asymptotically) perform 
well \emph{even in hindsight} compared to the best static tree. 

\begin{definition}[\bf Static Optimality]
Let $\stat$ be the cost of an optimal static BST with perfect
knowledge of the demand~$\sigma$, and let~$\On$ be an online algorithm for dynamic BST. 
We say that~$\On$ is \emph{statically optimal} if there exists some constant $\rho \ge 1$, and for any sufficiently long sequence of keys $\sigma$, we have that
$$
\cost(\On, \sigma, B) \le \rho \stat(\sigma)
$$
\noindent
where $B$ is the initial BST from which $\On$ starts.
In other words,~$\On$'s cost is at most a 
constant factor higher than~$\stat$ in the worst case.
\end{definition}

Note that when $\alpha=O(1)$ known algorithms for dynamic BST that are static optimal, e.g.,  \cite{baer1975weight,mehlhorn1979dynamic,sleator1985self,demaine2007dynamic,demaine2009geometry}, can be used to achieve static optimality in our model. The more interesting scenario is when $\alpha=\omega(1)$ and naive current algorithms with rotation cost (or splay to the root, or move to root) of $\alpha$ will not be static optimal. In this work, we assume for simplicity that $\alpha \ge 2$.

Let $w_i(t)$ be the number of appearances of key $i$ in $\sigma$ up to time $t$. Note that $\sum_{i=1}^n w_i(t)=t$. For short let $w_i = w_i(m)$.
Let $\overline{W}=\overline{W}(\sigma)$ be the frequency (or empirical) distribution of $\sigma$, i.e., $\overline{W}=\{\frac{w_1}{m}, \frac{w_2}{m}, \dots, \frac{w_n}{m}\}$, and $\overline{W}(t)$ the frequency distribution up to time $t$.
It is a classical result that the optimal \emph{static} BST for $\sigma$ has amortized access cost of $\Theta(H(\overline{W}))$ where $H$ is the entropy function \cite{mehlhorn1975nearly}, i.e.,
$\stat(\sigma) = \Theta(m (1+H(\overline{W}(\sigma))))$.

The main result of this paper is a simple dynamic BST algorithm, A-BST, that is based on arithmetic coding \cite{Witten87arithmetic} and in particular on a dynamic Shannon-Fano-Elias coding \cite{cover2012elements} and is statically optimal for sequences of length $\Omega(n \alpha \log \alpha)$. Formally,

\begin{thm1}
\label{thm:main}
A-BST (Algorithm \ref{alg:A-BST}) is a statically optimal, dynamic BST for sequences of length at least  $2 n \alpha \log \alpha$.
\end{thm1}

The rest of the paper is organized as follows: In Section \ref{sec:pre} we review related concepts like Entropy and the Shannon-Fano-Elias (SFE) coding with a detailed example. Section \ref{sec:algo} presents A-BST including  related algorithms and an example for its dynamic operation. In Section \ref{sec:proof} we prove Theorem \ref{thm:main}. Finally, we conclude with a discussion and open questions in Section \ref{sec:discussion}.

\section{Preliminaries}\label{sec:pre}
\paragraph{Entropy and Shannon-Fano-Elias (SFE) Coding.}

Entropy is a known measure of unpredictability of information content~\cite{shannon1948mathematical}. For a discrete random variable $X$ with possible values
$\{x_1, \dots , x_n\}$, the (binary) entropy $H(X)$ of $X$ is defined as
$H(X) = \sum_{i=1}^n p_i\log_2\frac{1}{p_i}$
where $p_i$ is the probability that $X$ takes the value $x_i$. Note that, $0 \cdot \log_2\frac{1}{0}=0$ and we usually assume that
$p_i > 0$ $\forall i$. Let $\overline{p}$ denote $X$'s probability distribution,
 then we may write $H(\overline{p})$ instead of $H(X)$.

Shannon-Fano-Elias (SFE)~\cite{cover2012elements} is a well-known symbol, \emph{prefix code} for lossless data compression and is the predecessor of the more famous and used (variable block length) arithmetic coding \cite{Witten87arithmetic}. 
The Shannon-Fano-Elias code produce variable-length code-words
based on the  probability of each possible symbol.
As in other entropy encoding methods (e.g., Huffman), higher probability symbols are represented using fewer bits, to reduce the expected code length. Unlike Huffman coding the coding method works for any order of symbols and the probabilities do not need to be sorted. This will fit better with the searching property we need in the tree later on.  

The encoding is based on the cumulative distribution function (CDF) of the probability distribution, 
\begin{align} \label{eq:cdf}
F(i) = \sum\limits_{j \leq i} p_j,
\end{align}
%
and encodes symbols using the function $\overline{F}$ where
\begin{align}\label{eq:cwfunc}
	\overline{F}(i) = \sum_{j < i} p_j + \frac{p_i}{2} = F(i-1) + \frac{p_i}{2}. 
\end{align}
%
%
Denote by $B(i)$ the binary representation of $\overline{F}(i)$. 
The code-word $C(i)$ for the $i$'th symbol $x_i$ consists of the first $\ell(i)$ bits of the fractional part of $B(i)$, denoted as
\begin{align}\label{eq:cword}
	C(i) = \lfloor B(i) \rfloor_{\ell(i)}, 
\end{align}
where the \emph{code length} $\ell(i)$ is defined as 
\begin{align}\label{eq:cwlength}
\ell_i = \left \lceil \log \frac{1}{p_i} \right \rceil + 1. 
\end{align}

%
%

The above construction guarantees $(i)$ that the code-words $C(i)$ are prefix-free 
and therefore appear as leaves in the binary tree that represent the code, and $(ii)$ that the average code length $L_{\mathrm{SFE}}(X)$, 
\begin{align}\label{eq:ecl}
L_{\mathrm{SFE}}(X) = \sum\limits_{i = 1}^n p_i \cdot l(i) &= \sum\limits_{i = 1}^n p_i (\lceil \log \frac{1}{p_i} \rceil + 1) 
\end{align}
is close to the entropy $H(X)$ of the random variable $X$, and in particular, 
\begin{align}\label{eq:eclbound}
	H(X) + 1 \leq L_{\mathrm{SFE}}(X) < H(X) + 2,
\end{align}

\begin{figure}[t!]
    \centering
    \begin{tabular}{c|c}
         \emph{\hypertarget{ExA}{Example~A}} & \emph{\hypertarget{ExB}{Example~B}} \\ \\
         \includegraphics[height=2cm]{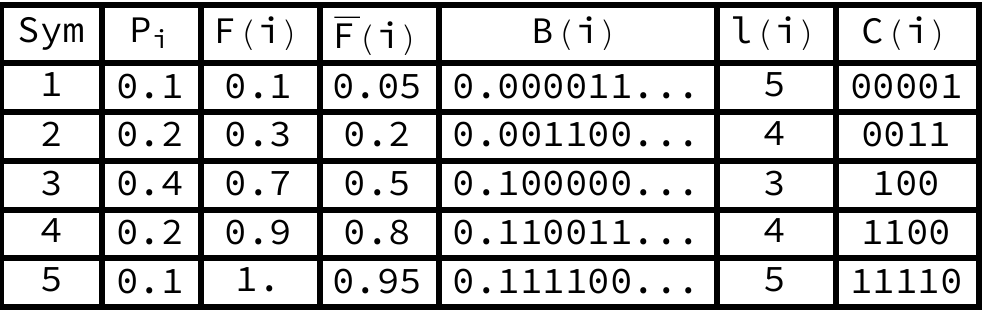} & \includegraphics[height=2cm]{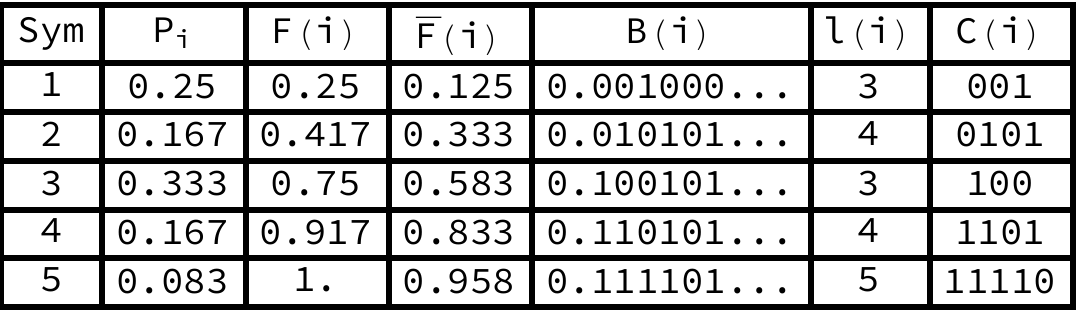} \\
         (a) & (e)\\ \\
         \includegraphics[height=3.7cm]{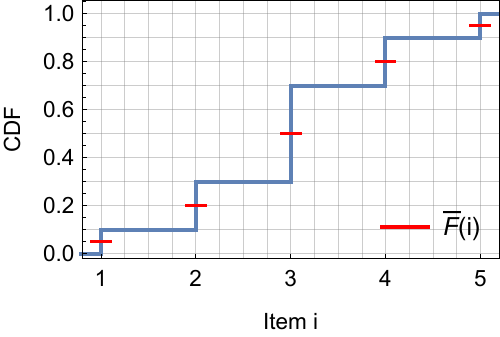} & \includegraphics[height=3.7cm]{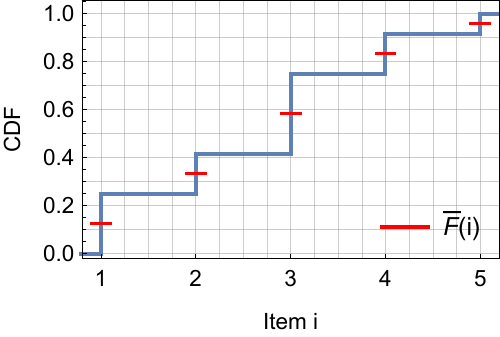} \\
         (b) & (f)\\ \\
         \includegraphics[height=3.5cm]{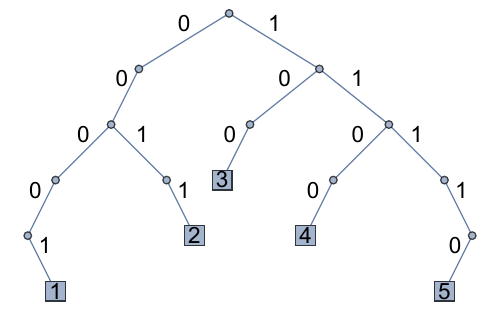} & \includegraphics[height=3.5cm]{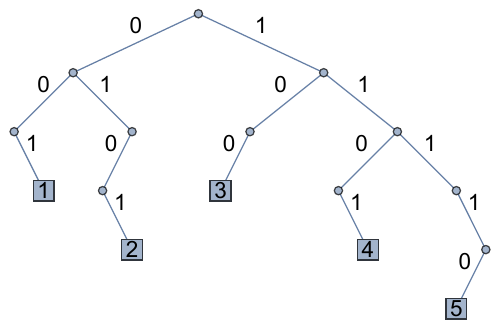} \\
         (c) & (g)\\ \\
         \includegraphics[height=3cm]{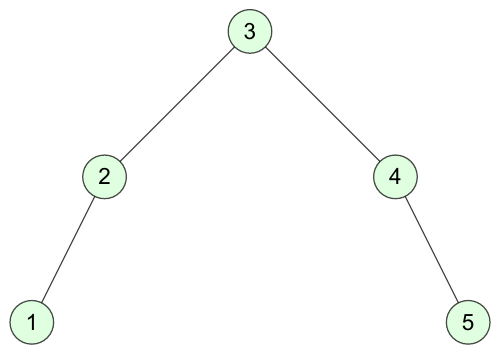} & \includegraphics[height=3cm]{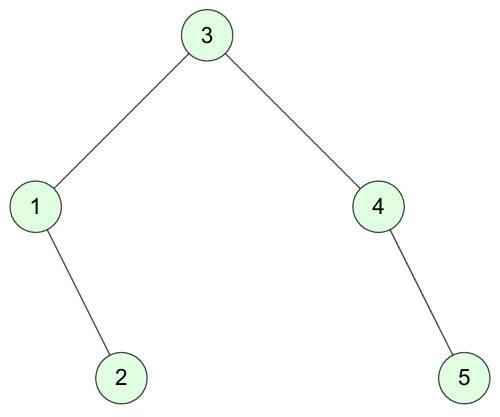} \\
         (d) & (h)
         
    \end{tabular}
    \caption{Our two running examples A and B. The top row shows the SFE coding details, second row shows the functions $F(i)$ and $\overline{F}(i)$, the third row shows the SFE prefix-free code-words as a binary tree and the last row present the corresponding BST after running algorithm \ref{alg:P2B}: PrefixTree2BST($T$).}
    \label{fig:example1}
\end{figure}

For an illustration consider \emph{\hyperlink{ExA}{Example~A}} in Figure \ref{fig:example1} which we will use also later in the paper. Let $X$ be a random variable with five possible symbols $\{1, 2, 3, 4, 5\}$ and the corresponding probabilities  $\{0.1, 0.2, 0.4, 0.2, 0.1\}$.
Figure \ref{fig:example1} (a) provides the detailed parameters for the encoding of each symbol $i$, including $F(i), \overline{F}(i)$, the binary representation $B(i)$, the code-word length $\ell(i)$ and the (prefix-free) binary code-word $C(i)$. Additionally, Figure \ref{fig:example1} (b) presents $F(i)$ and $\overline{F}(i)$ graphically and Figure \ref{fig:example1} (c) shows a binary tree with leaves as the code-words where the path from the root to each leaf represent its (prefix-free) code-word. Figure \ref{fig:example1} (d) shows the conversion of the prefix tree to a binary search tree which we will discuss next. 

\section{Arithmetic Binary Search Trees (A-BST)}\label{sec:algo}

The idea of arithmetic binary search trees is simple in principle. It is composed of three phases:
\begin{enumerate}
    \item Use arithmetic coding, and in particular, Shannon-Fano-Elias coding to create an efficient, variable-length, prefix-free code for a given (empirical) distribution.
    \item In turn, convert the binary tree of the prefix-free code to a \emph{biased} binary search tree (with an entropy bound on its performance). 
    \item Dynamically update the empirical distribution, the prefix-free code and the corresponding binary search tree.  
\end{enumerate}

We first describe steps $1$ and $2$ that may be of independent interest and then the crux of the method which is the dynamic update process. 

\subsection{From Shannon-Fano-Elias (SFE) coding to BST}

\begin{algorithm}[t]
    \caption{SFE-2-BST($\overline{P}$): Convert Arithmetic (SFE) coding to BST
        \label{alg:convert}}
    \begin{algorithmic}[1]
        \Require A probability distribution $\overline{P}$. $p_i$ is the probability of key $i$, the key with rank $i$ in the sorted keys' list.  
        \Ensure A BST $B$ where key $i$ is at distance $O(\log \frac{1}{p_i})$ from the root.
        \State Create a prefix-free code-word for each key $i$ via SFE coding. 
        \State Create a binary tree $T$ from the SFE-code, with keys as leaves (i.e., binary trie). 
        \State $B$ = PrefixTree-2-BST($T$). \Comment{See Algorithm \ref{alg:P2B}}
    \end{algorithmic}
\end{algorithm}

Algorithms \ref{alg:convert} and \ref{alg:P2B} shows how to create a near optimal biased BST for a given distribution via SFE coding.
Algorithm SFE-2-BST (Algorithm \ref{alg:convert}) is first provided with a probability distribution $\overline{P}$ for the biased BST. Note that, w.l.o.g, the distribution is sorted by the keys values. From $\overline{P}$ the algorithm then creates a \emph{prefix-free} binary code using SFE coding\footnote{Any other coding method that preserves the keys order can be used, e.g., Shannon–Fano coding \cite{cover2012elements} which split the probabilities to half, Alphabetic Codes \cite{yeung1991alphabetic} or Mehlhorn tree \cite{mehlhorn1975nearly}, but not e.g., Huffman coding \cite{huffman1952method}, that needs to sort the probabilities.} and the 
corresponding binary tree $T$ where keys (symbols) are at the leaves and ordered by their value (and not by probabilities). 
Next, the algorithm calls Algorithm PrefixTree-2-BST (Algorithm \ref{alg:P2B}) which converts $T$ to a BST $B$.

\begin{algorithm}[t]
    \caption{PrefixTree-2-BST($T$):  Convert a binary prefix tree to BST
        \label{alg:P2B}}
    \begin{algorithmic}[1]
        \Require A prefix tree $T$ with keys as leaves in sorted order.
        \Ensure A BST $B$ where each key in $B$ is at least as close to the root as in $T$.
        \If{$T$ is of size one or NULL}
            \State return $T$
        \Else
            \State Let $pre$ be the rightmost leaf in the left subtree of $T$ (pre-order)
            \State Let $post$ the leftmost leaf in the right subtree of $T$ (post-order)
            \State Let $\ell$ be the leaf with lower depth between $pre$ and $post$
            \State Let $B$ be a new binary search tree with $\ell$ as a root
            \State Delete $\ell$ from $T$
            \State $B$.root.left$ = $PrefixTree-2-BST($T$.root.left)
            \State $B$.root.right$ = $PrefixTree-2-BST($T$.root.right)
            \State return $B$
        \EndIf
    \end{algorithmic}
\end{algorithm}

Algorithm \ref{alg:P2B} is a recursive algorithm that receives a binary tree $T$ with keys as leaves in increasing order. It creates a BST $B$ by setting the root of the tree as the key which is closer to the root between the two keys that are pre-order and post-order to the root (and then deleting the key from $T$). The left and right subtrees of $B$'s root are created by calling recursively the algorithm with the relevant subtrees of $T$.

Recall Figure \ref{fig:example1} (c) which shows a binary tree $T$ for the SFE code of the probability distribution in \emph{\hyperlink{ExA}{Example~A}}. The keys $1,2,3,4,5$ are leaves and in sorted order.  This tree is an intermediate result of SFE-2-BST($\overline{P}$) algorithm. Figure \ref{fig:example1} (d) presents the corresponding BST after calling PrefixTree-2-BST$(T)$.
Initially $3$ is selected as the root as it is closer to the root between keys $2$ and $3$. Then (after deleting $3$ from the tree), the algorithm continues recursively, by building the children of $B$'s root with the left and right subtrees of $T$'s root. On the left, 2 is selected as the next root and then 1 as its left child. On the right $4$ is selected as the next root and then 5 as its right child.

We can easily bound the depth of any key in the tree generated by SFE-2-BST.

\begin{clm}\label{clm:depth}
For a key-sorted probability distribution $\overline{P}$, the algorithm SFE-2-BST($\overline{P}$) creates a biased BST $B$ where for each key $i$ it holds $\depth(B, i) < \log \frac{1}{p_i} +3$.
\end{clm}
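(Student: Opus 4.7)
The plan is to combine two ingredients. First, by Equation~(\ref{eq:cwlength}) the SFE code-word for key $i$ has length $\ell_i = \lceil \log(1/p_i) \rceil + 1$, and since the code-words form the leaves of the prefix trie $T$, key $i$ appears as a leaf at $\depth(T,i) = \ell_i + 1$ under the paper's convention that the root sits at depth one. Second, the postcondition advertised by Algorithm~\ref{alg:P2B} asserts that $\depth(B,i) \leq \depth(T,i)$ for every key $i$. Chaining these gives $\depth(B,i) \leq \lceil \log(1/p_i) \rceil + 2 < \log(1/p_i) + 3$, where the strict inequality is immediate whether or not $\log(1/p_i)$ is an integer. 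Thus the bulk of the work is to verify the second ingredient, namely that PrefixTree-2-BST never pushes a key to a strictly larger depth than it had in $T$.

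The depth-non-increase property can be proved by induction on $|T|$. The base case, $|T|=1$, is immediate since $B=T$. For the inductive step, let $\ell$ be the leaf selected at line~6 as the root of $B$; by symmetry, assume $\ell$ is the pre-order neighbour $pre$ from the left subtree $T_L$. The chosen leaf is promoted to $\depth(B,\ell) = 1$ while $\depth(T,\ell) \geq 2$, so the inequality holds for $\ell$. For a key $i$ in the right subtree $T_R$ (which is handed unchanged to the recursive call), the induction hypothesis gives $\depth(B_R,i) \leq \depth(T_R,i)$ when $T_R$ is regarded as a standalone tree, and adding one on both sides to account for the new root of $B$ yields $\depth(B,i) \leq \depth(T,i)$. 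For a key $i$ in $T_L \setminus \{\ell\}$, the key observation is that deleting the leaf $\ell$ cannot increase the depth of any remaining key, so applying the induction hypothesis to the modified left subtree and again adding one closes the case.

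The only delicate point is making the deletion in line~8 precise: if $\ell$'s parent becomes a node of degree one after the removal, the natural reading is to contract the exposed chain of only-child internal nodes, since those nodes carry no keys (all keys live at leaves of the trie). Under this convention the remaining keys' depths can only decrease or stay the same, which is exactly what the induction needs; any other sensible interpretation of the deletion enjoys the same property. Once this point is pinned down, the induction closes, and the claim follows by combining the depth-non-increase property with the SFE code-length bound.
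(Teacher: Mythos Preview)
Your proposal is correct and follows essentially the same approach as the paper: both bound the leaf depth in the prefix trie $T$ via the SFE code-word length $\ell_i$ from Eq.~(\ref{eq:cwlength}) and then use that PrefixTree-2-BST can only decrease depths. The paper merely asserts the depth-non-increase property in one sentence, whereas you supply the inductive argument and discuss the deletion semantics explicitly; this is a welcome elaboration but not a different route.
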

\begin{proof}
The SFE code creates code-words with length $\ell_i = \log \frac{1}{p_i} + 2$, so the tree $T$ in Algorithm SFE-2-BST has the leaves at depth $\ell_i +1$. Algorithm PrefixTree-2-BST($T$) does not change the symmetric order of keys that are initially stored at leaves, and can only decrease the depth of any key. \qed
\end{proof}

We next discuss how to make the BST dynamic.

\subsection{Dynamic BST}

Algorithm \ref{alg:A-BST} provides a pseudo-code for a dynamic Arithmetic Binary Search Tree (A-BST) $B_t$.
The algorithm starts from $B_1$ which is a balanced BST over the $n$ possible keys (assuming no knowledge on the keys distribution). 
The algorithm maintains two probability distribution vectors  that change over time, $\overline{P}=\{p_1, \dots , p_n\}$ and $\overline{Q}=\{q_1, \dots , q_n\}$. $\overline{P}$ holds the probability distribution by which the current BST was built and initially is set to the uniform distribution. $\overline{Q}$ holds the empirical distribution of $\sigma$ up to time $t$ and is updated on each request, i.e., $q_i(t) = \frac{w_i(t)}{t}$~\footnote{To avoid zero probabilities, in practice $q_i(t)$ is set by Laplace rule \cite{carnap1947application} to be $\frac{w_i(t)+1}{t+n}$, this does not change the main Theorem since for $t \ge n$ we have $\frac{w_i(t)+1}{t+n} \ge \frac{w_i(t)}{2t}$. For simplicity of exposition we assume $q_i(t) = \frac{w_i(t)}{t}$.}.
The crux of the algorithm is that, whenever a key $i$ is requested, if as a result $p_i < q_i/2$, then  
$\overline{P}$ is updated to be equal to $\overline{Q}$, and 
a new BST is created (at cost of $\alpha$) from $\overline{P}$.
Note that upon a request to key $i$ only $q_i$ increases while all other probabilities decrease. So the condition $p_j < q_j/2$ can only become valid for $j=i$ when key $i$ is requested.

We demonstrate the algorithm using \emph{\hyperlink{ExB}{Example~B}}. Assume we have $5$ keys $1,2,3,4,5$ and after the first ten requests in $\sigma$ we have $\overline{Q} = \overline{P} = \{ \frac{1}{10}, \frac{2}{10}, \frac{4}{10}, \frac{2}{10}, \frac{1}{10}\}$. Note that this is the empirical distribution of \hyperlink{ExA}{Example~A}, so $B_{10}$ is the BST that is shown in Figure \ref{fig:example1} (d). Next we assume that the $11th$ and $12th$ requests are for key 1. 
After the $11th$ request $p_1(11) = \frac{1}{10}$ and $q_1(11) = \frac{2}{11}$ so $p_1(11) > q_i(11)/2$ and $B_{11}$ remains as $B_{10}$. After the $12th$ request we have $p_1(12) = \frac{1}{10}$ and $q_1(12) = \frac{3}{12}$, and now  $p_1(12) < q_i(12)/2$. So at time $t=12$, a new SFE code and a new BST are created with the empirical distribution $\overline{Q} = \overline{P} = \frac{3}{12}, \frac{2}{12}, \frac{4}{12}, \frac{2}{12}, \frac{1}{12}$. The details of the code for this distribution are given in Figure \ref{fig:example1}~(e)-(g) and the new BST, $B_{12}$ is shown in Figure \ref{fig:example1}~(h). Note that key 1 got closer to the root.

Next we analyse A-BST and show it is statically optimal.

\begin{algorithm}[t]
    \caption{A-BST: Simple Arithmetic BST
        \label{alg:A-BST}}
    \begin{algorithmic}[1]
        \Require A BST $B_t$, $\overline{P}$ - the current tree distribution, $\overline{Q}$ - empirical distribution (or model distribution). $\forall i, ~ p_i \ge q_i/2$.      
        \Ensure Dynamic BST $B_{t+1}$, $\forall i, ~ p_i \ge q_i/2$.
        \State Upon request of key $i$ at time $t$
        \State Update $\overline{Q}$  \Comment{$q_i(t) = \frac{w_i(t)}{t}=\frac{w_i(t-1)+1}{t}$}
        \If{$p_i < q_i/2$ } \Comment{update $\overline{P}$ if needed}
            \State Set $\overline{P}$ to $\overline{Q}$ \label{ln:set}
            \State $B_{t+1} = $ SFE-2-BST($\overline{P}$) \label{ln:B }\Comment{at cost $O(\alpha)$}
        \Else 
            \State $B_{t+1}=B_{t}$
        \EndIf
        \State Serve key $i$ on $B_{t+1}$ at cost $O(\log \frac{1}{p_i})$
    \end{algorithmic}
\end{algorithm}

\section{A-BST is statically Optimal}\label{sec:proof}
We now prove the main result of this work. Recall that we assume $\alpha \ge 2$ and the more intresting case is $\alpha=\omega(1)$. 
\begin{theorem}
A-BST (Algorithm \ref{alg:A-BST}) is a statically optimal, dynamic BST for sequences of length at least  $2 n \alpha \log \alpha$.
\end{theorem}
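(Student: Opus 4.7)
The plan is to split the total cost of A-BST into an \emph{access cost} $\sum_t \depth(B_t,\sigma_t)$ and a \emph{reconfiguration cost} $\alpha R$, where $R$ counts the number of rebuilds, and to show that under the hypothesis $m \ge 2n\alpha\log\alpha$ both pieces are $O(\stat(\sigma))$. Since Mehlhorn's bound recalled in the introduction gives $\stat(\sigma)=\Theta(m(1+H(\overline{W})))$, it suffices to show that each of the two pieces is $O(m(1+H(\overline{W})))$.

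For the access cost, I would combine Claim~\ref{clm:depth} with the invariant $p_i \ge q_i/2$ maintained by Algorithm~\ref{alg:A-BST} to get the pointwise bound $\depth(B_t,\sigma_t) \le \log(1/q_{\sigma_t}(t)) + O(1)$. Summing over $t$ and regrouping by key turns the logarithmic part into $\sum_i \sum_{k=1}^{w_i} \log(t_{i,k}/k)$, where $t_{i,k}\le m$ is the time of the $k$-th access to key $i$. A Stirling estimate on $\sum_i \log(w_i!)$ then yields $\sum_i [w_i \log m - \log(w_i!)] \le \sum_i w_i \log(m/w_i) + O(m) = m H(\overline{W}) + O(m)$, so the access cost is $O(m(1+H(\overline{W}))) = O(\stat(\sigma))$. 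This step is essentially the classical redundancy estimate for the sequential ``plug-in'' coder with empirical probabilities.

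For the reconfiguration cost, the heart of the argument is a doubling lemma: whenever a rebuild is triggered by key $i$ at time $t$, the count $w_i$ has at least doubled since the previous rebuild (which may have been triggered by any key). Indeed, the rebuild condition $q_i(t)>2p_i$ together with the fact that $p_i$ was set to $q_i(s)$ at the most recent rebuild time $s<t$ gives $w_i(t)/t > 2 w_i(s)/s$, and $t>s$ then forces $w_i(t)>2w_i(s)$. Because $w_i$ is nondecreasing in time and $s$ lies no earlier than the previous rebuild triggered by $i$, it follows that key $i$ triggers at most $1+\log_2 w_i$ rebuilds. Summing over $i$ and invoking Jensen's inequality for the concave $\log$ gives $R \le n + \sum_i \log_2 w_i \le n \log_2(2m/n)$, so the reconfiguration cost is at most $\alpha n \log(2m/n)$. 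A direct verification---substitute $m=2n\alpha\log\alpha$, use $\alpha\ge 2$, and note that $\log(2m/n)/(m/n)$ is decreasing for $m\ge n$---shows that this quantity is at most $2m$ throughout the regime $m\ge 2n\alpha\log\alpha$, which is exactly where the threshold in the theorem comes from.

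The main obstacle I anticipate is making the doubling lemma airtight. The subtlety is that rebuilds triggered by different keys interleave, and every rebuild globally resets \emph{all} $p_j$ to the current $q_j$; so the baseline $p_i=q_i(s)$ that $q_i$ must double in order to trigger the next rebuild of $i$ can be set by a rebuild triggered by some other key, and could be strictly larger than $q_i$ at $i$'s previous rebuild. Monotonicity of $w_i$ in time is what rescues the argument, but one also has to handle the very first rebuild triggered by $i$, whose baseline is the initial uniform value $p_i=1/n$; this edge case is what the $+1$ in $1+\log_2 w_i$ absorbs.
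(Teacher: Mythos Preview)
Your proposal is correct, and for the access cost it is essentially the paper's argument: the paper also reduces to $\sum_t \log(1/q_{\sigma_t}(t))$ via Claim~\ref{clm:depth} and the invariant $p_i\ge q_i/2$, and then bounds this sum by $mH(\overline{W})+O(m)$; the only cosmetic difference is that the paper groups the requests for each key into dyadic blocks rather than invoking Stirling on $\sum_i\log(w_i!)$.

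Where you genuinely diverge is in the reconfiguration cost. The paper proves the same doubling lemma (its Claim~2) but then splits each key's rebuilds into an early phase $w_i(t)<2\alpha$ (at most $\log\alpha+1$ rebuilds, cost $\le 2\alpha\log\alpha$) and a late phase $w_i(t)\ge 2\alpha$ (each rebuild amortized against $\ge\alpha$ intervening requests to $i$, total cost $\le w_i$), yielding $\cost_A\le 2n\alpha\log\alpha+m\le 2m$; the threshold $2n\alpha\log\alpha$ falls out directly as the sum of the $n$ early-phase budgets. Your route is cleaner: from doubling you get at most $1+\log_2 w_i$ rebuilds per key, Jensen gives $R\le n\log_2(2m/n)$, and then a one-line calculus check shows $\alpha n\log_2(2m/n)\le 2m$ once $m\ge 2n\alpha\log\alpha$. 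Your argument avoids the case split and the amortization against access cost entirely; the paper's argument, on the other hand, makes the origin of the threshold more transparent (it is literally $n$ times the per-key early-phase cost) and gives a slightly sharper per-key accounting.
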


\begin{proof}
Let $B_t$ be the BST tree of the algorithm at time $t$. Let $\overline{P}_t = \{p_1(t), \dots, p_n(t)\}$ be the probability distribution by which $B_t$ was build. 
Let $\overline{Q}_t= \{q_1(t), \dots, q_n(t)\}$ be the frequency distribution of $\sigma$ up to time $t$.
Note that $q_i(m) = \frac{w_i}{m}$.
Observe that Algorithm \ref{alg:A-BST} guarantees that for each $i$ and each $t$ we have $p_i(t) \ge \frac{q_i(t)}{2}$.
We first analyse the \emph{searching cost} (or access cost \cite{blum2016static}), $\cost_S(\sigma)$ of Algorithm \ref{alg:A-BST}.

\begin{align}
    \cost_S(\sigma)&= \sum_{t=1}^m \depth(B_t, \sigma_t) \\
                &\le \sum_{t=1}^m \left ( \log(\frac{1}{p_{\sigma_t}(t)}) + 3 \right ) \label{eq:depthbound} \\
                &\le \sum_{t=1}^m \left ( \log(\frac{2}{q_{\sigma_t}(t)}) + 3 \right ) \\
                &\le  4m + \sum_{t=1}^m \log(\frac{1}{q_{\sigma_t}(t)})
\end{align}
where Eq. \eqref{eq:depthbound} is due to Claim \ref{clm:depth}.

Next we consider the cost $\sum_{t=1}^m \log(\frac{1}{q_{\sigma_t}(t)})$ and analyze 
each key $i$ at a time. We follow an idea mentioned in a classical paper on dynamic Huffman coding by Vitter \cite{vitter1987design} and is due to a personal communication of Vitter with B. Chazelle.
Consider the \emph{last} $\lfloor \frac{w_i}{2} \rfloor$ requests for key $i$. For each such request, since $w_i(t) \ge w_i/2$, we have $q_i(t) =\frac{w_i(t)}{t} \ge \frac{w_i}{2t} \ge \frac{w_i}{2m}$ . So the sum of these $\frac{w_i}{2}$ requests is at most $\frac{w_i}{2} \log \frac{2m}{w_i}$.

Similarly for each $j$'th request of key $i$ where $\lceil \frac{w_i}{2^{k}} \rceil< j \le \lceil \frac{w_i}{2^{k-1}} \rceil$ where $k=1, 2, \dots \lceil \log(w_i) \rceil$, 
$q_i(t) \ge \frac{w_i}{2^{k}m}$ and the cost of all of these requests is less than $\frac{w_i}{2^k} \log \frac{w_i}{2^k m}$.
So for each key $i$
\begin{align}
  \sum_{\sigma_t=i} \log(\frac{1}{q_{i}(t))}) &\le \sum_{k=1}^{\lceil \log(w_i) \rceil} \frac{w_i}{2^k} \log \frac{2^{k}m}{w_i} \\
  &\le \sum_{k=1}^{\lceil \log(w_i) \rceil} \frac{w_i}{2^k}( \log \frac{m}{w_i} + \log 2^{k} ) \\
  &\le w_i \log \frac{m}{w_i}\sum_{k=1}^{\lceil \log(w_i) \rceil} \frac{1}{2^k} + w_i \sum_{k=1}^{\lceil \log(w_i) \rceil} \frac{k}{2^k} \\
  &\le w_i \log \frac{m}{w_i} + 2 w_i
\end{align}
From this it follows that:
\begin{align}
    \sum_{t=1}^m \log(\frac{1}{q_{\sigma_t}(t))}) &\le \sum_{i=1}^n w_i \log \frac{m}{w_i} + 2 w_i \\
    &\le m H(\overline{W}) + 2m \label{eq:entropy}
\end{align}
A result identical to Eq. \eqref{eq:entropy} was recently shown in \cite{Golin18Dynamic}.
Next we consider the \emph{adjustment} cost. We do it again, one key at a time.
For a key $i$ consider all the adjustment it caused (lines 5-6 in Algorithm \ref{alg:A-BST}). 
We consider two cases: i) all adjustment when $w_i(t) \ge 2\alpha$ and ii) $w_i(t) < 2 \alpha$.
To prove these two cases we will use the following claim.
\begin{clm}\label{clm:double}
Let $i$ be the key that cause an adjustment (lines 5-6 in Algorithm \ref{alg:A-BST}) at time $t$.
Let $t'$ be the time of the previous adjustment initiated by any key. Then $w_i(t') < \frac{w_i(t)}{2}$.
\end{clm}
\begin{proof}
By Algorithm \ref{alg:A-BST} after the adjustment at time $t'$, 
$\overline{P}$ was set to $\overline{Q}$ and in particular $p_i(t')$ was set to $q_i(t')$.
The result follows since at time $t$ we have
\begin{align}
p_i(t) = p_i(t') = \frac{w_i(t')}{t'} &< \frac{q_i(t)}{2} = \frac{w_i(t)}{2t} \le \frac{w_i(t)}{2t'} 
\end{align}
\qed
\end{proof}
We can now analyse the two adjustment cases.
\paragraph{Case i: $w_i(t) \ge 2\alpha$.}
First notice that if $w_i(t') \ge 2\alpha$ then for each $t > t'$ $w_i(t) \ge 2\alpha$.
Assume a time $t$ for which $w_i(t) \ge 2\alpha$ and lines 5-6 in Algorithm \ref{alg:A-BST} was executed.
Now consider time $t' < t$ where the previous execution of lines 5-6 occurred. 

From Claim \ref{clm:double} and using the assumption that $w_i(t) \ge 2\alpha$, the number of requests to key $i$ since the last adjustment is:
\begin{align}
    w_i(t) - w_i(t') \ge w_i(t) - \frac{w_i(t)}{2} = \frac{w_i(t)}{2} \ge \alpha
\end{align}
Therefore we can amortize the adjustment at time $t$ (of cost $\alpha$) with the (at least) $\alpha$ cost of the adversary accessing the (at least) $\alpha$ requests to key $i$ between $t'$ and $t$.

\paragraph{Case ii: $w_i(t) \le 2\alpha$.}
From Claim \ref{clm:double}, $w_i(t)$ at least doubles between each adjustment that is caused by key $i$. So until
$w_i(t) \ge 2 \alpha$, key $i$ can cause at most $\log(\alpha)+1$ adjustments at a total adjustment cost that is less than $2\alpha \log \alpha$. 

It follows that the \emph{adjustment} cost, $\cost_A(\sigma)$, of Algorithm \ref{alg:A-BST} can be bounded as follows:
\begin{align}
    \cost_A(\sigma) &= \sum_{B_t \neq B_{t+1}} \alpha \\
    &= \sum_{i=1}^n \sum_{p_i < q_i/2} \alpha = \sum_{i=1}^n \left (\sum_{\substack{p_i(t) < q_i(t)/2 \\ w_i(t) < 2 \alpha}} \alpha + \sum_{\substack{p_i(t) < q_i(t)/2 \\ w_i(t) \ge 2 \alpha}} \alpha \right ) \label{eq:claim}\\
    &\le \sum_{i=1}^n \left ( 2\alpha \log \alpha + w_i \right) \label{eq:claim2}\\
    &= 2n \alpha \log \alpha + m \le 2m \label{eq:assumtion}
\end{align}
Equations \eqref{eq:claim} and \eqref{eq:claim2} follows from cases (i) and (ii) above. Eq. \eqref{eq:assumtion} follows form the assumption that $m\ge 2n \alpha \log \alpha$.

To finalize the proof we combine the above results to find the cost of A-BST.
\begin{align}
    \cost(\text{A-BST}, \sigma, B_1) &= \sum_{t=1}^m \depth(B_t,\sigma_t) + \alpha \mathbb{I}_{B_t \neq B_{t+1}} \\
        &= \cost_S(\sigma) + \cost_A(\sigma) \\
        &\le  4m + m H(\overline{W}) + 2m + 2m = m(8 + H(\overline{W}))
\end{align}
\qed
\end{proof}

\section{Discussion and Open Questions}
\label{sec:discussion} 
We believe that the matching model brings some interesting research directions. 
Our model can emulate the standard BST model when $\alpha=O(1)$, while dynamic optimally is a major and a long-standing open question for this case \cite{sleator1985self}, it may be possible that for some range of $\alpha$ the question becomes easier. Another property, possibly simpler, to prove or disprove approximation for, is a working-set like theorem \cite{sleator1985self}. 
A nice feature of the matching model is that it supports more complex networks than a BST.  
It is a future research direction to prove static optimality and other online properties for such networks.

Regarding A-BST, the algorithm can be extended in several directions to be more adaptive. First, it can work with a (memory) window instead of the whole history to become more dynamic. Careful attention is required to decide the window size in order not to lose the static optimally feature. More generally, as in arithmetic coding \cite{cover2012elements} which is an extension of the SFE coding, the prediction model can be independently replaced by a more sophisticated model than the current Laplace model.

\bibliographystyle{splncs04}
\bibliography{ref}

\begin{thebibliography}{10}
\providecommand{\url}[1]{\texttt{#1}}
\providecommand{\urlprefix}{URL }
\providecommand{\doi}[1]{https://doi.org/#1}

\bibitem{alistarh2015high}
Alistarh, D., Ballani, H., Costa, P., Funnell, A., Benjamin, J., Watts, P.,
  Thomsen, B.: A high-radix, low-latency optical switch for data centers. In:
  Proceedings of the 2015 ACM conference on SIGCOMM. pp. 367--368 (2015)

\bibitem{allen1978self}
Allen, B., Munro, I.: Self-organizing binary search trees. Journal of the ACM
  (JACM)  \textbf{25}(4),  526--535 (1978)

\bibitem{avin2020complexity}
Avin, C., Ghobadi, M., Griner, C., Schmid, S.: On the complexity of traffic
  traces and implications. Proceedings of the ACM on Measurement and Analysis
  of Computing Systems  \textbf{4}(1),  1--29 (2020)

\bibitem{avin2020online}
Avin, C., Griner, C., Salem, I., Schmid, S.: An online matching model for
  self-adjusting tor-to-tor networks. arXiv preprint arXiv:2006.11148  (2020)

\bibitem{avin2017demand}
Avin, C., Mondal, K., Schmid, S.: Demand-aware network designs of bounded
  degree. Distributed Computing pp. 1--15 (2017)

\bibitem{avin2019renets}
Avin, C., Schmid, S.: Renets: Toward statically optimal self-adjusting
  networks. arXiv preprint arXiv:1904.03263. To appear in APOCS 2020.  (2019)

\bibitem{avin2019toward}
Avin, C., Schmid, S.: Toward demand-aware networking: A theory for
  self-adjusting networks. ACM SIGCOMM Computer Communication Review
  \textbf{48}(5),  31--40 (2019)

\bibitem{baer1975weight}
Baer, J.L.: Weight-balanced trees. In: Proceedings of the May 19-22, 1975,
  national computer conference and exposition. pp. 467--472 (1975)

\bibitem{ballani2020sirius}
Ballani, H., Costa, P., Behrendt, R., Cletheroe, D., Haller, I., Jozwik, K.,
  Karinou, F., Lange, S., Shi, K., Thomsen, B., et~al.: Sirius: A flat
  datacenter network with nanosecond optical switching. In: Proceedings of the
  2016 ACM SIGCOMM Conference. pp. 782--797 (2020)

\bibitem{benson2009understanding}
Benson, T., Anand, A., Akella, A., Zhang, M.: Understanding data center traffic
  characteristics. In: Proc. 1st ACM Workshop on Research on Enterprise
  Networking (WREN). pp. 65--72. ACM (2009)

\bibitem{blum2016static}
Blum, A., Chawla, S., Kalai, A.T.: Static optimality and dynamic
  search-optimality in lists and trees. Algorithmica  \textbf{36} (2016)

\bibitem{carnap1947application}
Carnap, R.: On the application of inductive logic. Philosophy and
  phenomenological research  \textbf{8}(1),  133--148 (1947)

\bibitem{chalermsook2020new}
Chalermsook, P., Jiamjitrak, W.P.: New binary search tree bounds via geometric
  inversions. In: 28th Annual European Symposium on Algorithms (ESA 2020).
  Schloss Dagstuhl-Leibniz-Zentrum f{\"u}r Informatik (2020)

\bibitem{cover2012elements}
Cover, T.M., Thomas, J.A.: Elements of information theory. John Wiley \& Sons
  (2012)

\bibitem{demaine2009geometry}
Demaine, E.D., Harmon, D., Iacono, J., Kane, D., P\u{a}tra\c{s}cu, M.: The
  geometry of binary search trees. In: Proceedings of the twentieth annual
  ACM-SIAM symposium on Discrete algorithms. pp. 496--505. SIAM (2009)

\bibitem{demaine2007dynamic}
Demaine, E.D., Harmon, D., Iacono, J., P\u{a}tra\c{s}cu, M.: Dynamic
  optimality—almost. SIAM Journal on Computing  \textbf{37}(1),  240--251
  (2007)

\bibitem{farrington2010helios}
Farrington, N., Porter, G., Radhakrishnan, S., Bazzaz, H.H., Subramanya, V.,
  Fainman, Y., Papen, G., Vahdat, A.: Helios: a hybrid electrical/optical
  switch architecture for modular data centers. In: Proceedings of the ACM
  SIGCOMM 2010 conference. pp. 339--350 (2010)

\bibitem{ghobadi2016projector}
Ghobadi, M., Mahajan, R., Phanishayee, A., Devanur, N., Kulkarni, J., Ranade,
  G., Blanche, P.A., Rastegarfar, H., Glick, M., Kilper, D.: Projector: Agile
  reconfigurable data center interconnect. In: Proceedings of the 2016 ACM
  SIGCOMM Conference. pp. 216--229 (2016)

\bibitem{Golin18Dynamic}
Golin, M., Iacono, J., Langerman, S., Munro, J.I., Nekrich, Y.: {Dynamic Trees
  with Almost-Optimal Access Cost}. In: Azar, Y., Bast, H., Herman, G. (eds.)
  26th Annual European Symposium on Algorithms (ESA 2018). Leibniz
  International Proceedings in Informatics (LIPIcs), vol.~112, pp. 38:1--38:14.
  Schloss Dagstuhl--Leibniz-Zentrum fuer Informatik, Dagstuhl, Germany (2018).
  \doi{10.4230/LIPIcs.ESA.2018.38},
  \url{http://drops.dagstuhl.de/opus/volltexte/2018/9501}

\bibitem{griner2020performance}
Griner, C., Zerwas, J., Blenk, A., Ghobadi, M., Schmid, S., Avin, C.:
  Performance analysis of demand-oblivious and demand-aware optical datacenter
  network designs. arXiv preprint arXiv:2010.13081.  (2020)

\bibitem{hamedazimi2014firefly}
Hamedazimi, N., Qazi, Z., Gupta, H., Sekar, V., Das, S.R., Longtin, J.P., Shah,
  H., Tanwer, A.: Firefly: A reconfigurable wireless data center fabric using
  free-space optics. In: Proceedings of the 2014 ACM conference on SIGCOMM. pp.
  319--330 (2014)

\bibitem{huffman1952method}
Huffman, D.A.: A method for the construction of minimum-redundancy codes.
  Proceedings of the IRE  \textbf{40}(9),  1098--1101 (1952)

\bibitem{xpander}
Kassing, S., Valadarsky, A., Shahaf, G., Schapira, M., Singla, A.: Beyond
  fat-trees without antennae, mirrors, and disco-balls. In: Proceedings of the
  2017 ACM conference on SIGCOMM. pp. 281--294. ACM (2017)

\bibitem{knuth1971optimum}
Knuth, D.E.: Optimum binary search trees. Acta informatica  \textbf{1}(1),
  14--25 (1971)

\bibitem{levy2019new}
Levy, C., Tarjan, R.: A new path from splay to dynamic optimality. In:
  Proceedings of the Thirtieth Annual ACM-SIAM Symposium on Discrete
  Algorithms. pp. 1311--1330. SIAM (2019)

\bibitem{mehlhorn1975nearly}
Mehlhorn, K.: Nearly optimal binary search trees. Acta Informatica
  \textbf{5}(4),  287--295 (1975)

\bibitem{mehlhorn1979dynamic}
Mehlhorn, K.: Dynamic binary search. SIAM Journal on Computing  \textbf{8}(2),
  175--198 (1979)

\bibitem{mellette2017rotornet}
Mellette, W.M., McGuinness, R., Roy, A., Forencich, A., Papen, G., Snoeren,
  A.C., Porter, G.: Rotornet: A scalable, low-complexity, optical datacenter
  network. In: Proceedings of the 2016 ACM SIGCOMM Conference. pp. 267--280
  (2017)

\bibitem{roy2015inside}
Roy, A., Zeng, H., Bagga, J., Porter, G., Snoeren, A.C.: Inside the social
  network's (datacenter) network. In: Proc. ACM SIGCOMM Computer Communication
  Review (CCR). vol.~45, pp. 123--137. ACM (2015)

\bibitem{schmid2015splaynet}
Schmid, S., Avin, C., Scheideler, C., Borokhovich, M., Haeupler, B., Lotker,
  Z.: Splaynet: Towards locally self-adjusting networks. IEEE/ACM Transactions
  on Networking  \textbf{24}(3),  1421--1433 (2015)

\bibitem{shannon1948mathematical}
Shannon, C.E.: A mathematical theory of communication. The Bell system
  technical journal  \textbf{27}(3),  379--423 (1948)

\bibitem{sleator1985self}
Sleator, D.D., Tarjan, R.E.: Self-adjusting binary search trees. Journal of the
  ACM (JACM)  \textbf{32}(3),  652--686 (1985)

\bibitem{vitter1987design}
Vitter, J.S.: Design and analysis of dynamic huffman codes. Journal of the ACM
  (JACM)  \textbf{34}(4),  825--845 (1987)

\bibitem{Witten87arithmetic}
Witten, I.H., Neal, R.M., Cleary, J.G.: Arithmetic coding for data compression.
  Commun. ACM  \textbf{30}(6),  520--540 (Jun 1987).
  \doi{10.1145/214762.214771}, \url{http://doi.acm.org/10.1145/214762.214771}

\bibitem{yeung1991alphabetic}
Yeung, R.W.: Alphabetic codes revisited. IEEE Transactions on Information
  Theory  \textbf{37}(3),  564--572 (1991)

\end{thebibliography}

\end{document}